\newcommand{\tausig}{\tau_t^2 + \sigma_h^2}
\newcommand{\mbf}[1]{\mathbf{#1}}
\DeclareMathOperator*{\argmax}{argmax}
\newtheorem{theorem}{Theorem}
\newtheorem{lemma}{Lemma}
\begin{document}

\title{Sparse Regression Codes for Non-coherent \\ SIMO channels}

\author{
\IEEEauthorblockN{Sai Dinesh Kancharana, Madhusudan Kumar Sinha, Arun Pachai Kannu.}
      \thanks{This work is supported in part by the Qualcomm 6G UR grant and partly by the Prime Minister's Research Fund (PMRF).
      The authors are with the Department of Electrical Engineering, Indian Institute of Technology Madras, Chennai - 600036, Tamil Nadu, India. (email: ee20d401@smail.iitm.ac.in, ee16d028@smail.iitm.ac.in, arunpachai@ee.iitm.ac.in).
      }
}




\maketitle

\begin{abstract}
Motivated by hyper-reliable low-latency communication in 6G, we consider error control coding for short block lengths in multi-antenna fading channels. In general, the channel fading coefficients are unknown at both the transmitter and receiver, which is referred to as non-coherent channels. Conventionally, pilot symbols are transmitted to facilitate channel estimation, causing power and bandwidth overhead. Our paper considers sparse regression codes (SPARCs) for non-coherent flat-fading channels without using pilots. We develop a novel greedy decoder for SPARC using maximum likelihood principles, referred to as maximum likelihood matching pursuit (MLMP). MLMP works based on successive combining principles as opposed to conventional greedy algorithms, which are based on successive cancellation. We also obtain the noiseless perfect recovery condition for our successive combining algorithm. In addition, we develop an approximate message passing (AMP) SPARC decoder for the non-coherent flat fading model. Using simulation studies, we show that the MLMP decoder for SPARC outperforms AMP and other greedy decoders. Also, SPARC with MLMP decoder outperforms polar codes employing pilot-based channel estimation and polar codes with non-coherent decoders.
\end{abstract}

\begin{IEEEkeywords}
Sparse regression codes, short block lengths, non-coherent detection, greedy algorithm, approximate message passing
\end{IEEEkeywords}

\section{Introduction}
\IEEEPARstart{T}{he}
 rapid proliferation of smart devices and the advancement of wireless systems underscore the importance of low-latency communications (LLC). A key use case in next-generation 6G networks is hyper-reliable low-latency communication (HRLLC), which targets ultra-low latency (on the order of 0.1–1 ms) and extremely high reliability to support applications such as industrial automation, telemedicine, and autonomous driving \cite{HRLLC1,HRLLC2arxiv,HRLLC_transport}. Conventional error control codes used in modern cellular communications use very large block lengths. However, for scenarios such as the Internet of Things (IoT), where many devices are simultaneously connected or mission-critical applications such as Vehicle-to-everything (V2X) communications, where the latency needs to be very low, short block length codes are suitable.  
 
 In typical wireless fading environments, the channel coefficients are unknown a priori. Hence, pilot symbols are transmitted apart from the data, in order to facilitate channel estimation followed by data detection \cite{apk1}. These pilot symbols consume power and bandwidth overhead, reducing the resources available for data transmission, which can have a significant impact, especially for short block length codes.  Channel estimation errors will further affect the data detection performance. 
 In this paper, we consider sparse regression codes (SPARCs)  with short block lengths for unknown flat-fading single-input multiple-output (SIMO) channels. We develop non-coherent decoders for SPARC that do not require knowledge of the channel fading coefficients.  

\subsection{Existing literature on SPARCs}
Sparse regression codes (SPARCs), also called Sparse superposition codes, were developed by Barron and Joseph \cite{barron_joseph_2011_ISIT,joseph_barron_2012_TIT,joseph_barron_2014_TIT} for error correction over the AWGN channel. SPARC codewords are encoded based on a measurement matrix $\mbf{A}$ of size $N \times L$, partitioned into $K$ equal-sized sections. Based on the information bits, one column is chosen from each section, and the final codeword is formed as a linear combination of the chosen columns. Hence, the codeword can be represented as $\mbf{s} = \mbf{Ax}$, where $\mbf{x}$ is a $K$-sparse $L$-length vector. Hence, decoding from the SPARC codeword over a noisy channel is essentially a sparse recovery problem. The nonzero values of $\mbf{x}$ can take values from an $M$-ary constellation to encode additional bits into the SPARC codeword. The code rate of the SPARC is given by 
\begin{equation}
    R = \frac{K}{N}\log_2 \frac{L}{K} + \frac{K}{N}\log_2 M.
\end{equation}
The authors in \cite{barron_joseph_2011_ISIT,joseph_barron_2012_TIT,joseph_barron_2014_TIT} proposed an \emph{adaptive successive decoder} for which the probability of decoding error decays exponentially to zero with increasing block length while having poor empirical performance at finite block lengths. Later, \cite{barron_cho_2012_ISIT} improved the empirical performance while giving the same asymptotic guarantees. Using a random Gaussian dictionary matrix and an approximate message passing algorithm (AMP), the authors in \cite{2017_Crush_unmod_sparc} showed that SPARCs in the AWGN channel can achieve arbitrarily low section error rates for all rates $R$ less than channel capacity $C$, i.e., $R < C$. Using an alternate formulation, the authors in \cite{2017_barbier_AMP_SPARCs} also demonstrated the capacity-achieving nature of SPARCs with AMP using replica analysis and spatially coupled matrices. In addition, techniques such as spatial coupling \cite{2015_barbier_approximate,2017_barbier_AMP_SPARCs,2019_SPARC_GAMP} and optimal power allocation \cite{2017_greig_Techniques} were shown to improve empirical performance at finite block lengths. In \cite{2021_Crush_spatialcoupling} and \cite{2021_Ramji_mod_sparc}, the authors proved that SPARCs with spatially coupled matrices and modulated SPARCs can also achieve arbitrarily low section errors with sufficiently large block lengths and rates less than capacity.  Hadamard-based matrices \cite{2017_Crush_unmod_sparc,2017_barbier_AMP_SPARCs} and Fast Fourier Transform (FFT) matrices \cite{2021_Ramji_mod_sparc} have also been shown to work well with the AMP decoder while having lower computational complexity. Orthogonal AMP \cite{2017_oamp} and vector AMP \cite{2019_rangan_vamp} were introduced for more general dictionary matrices that were non-Gaussian. Recently, the authors in \cite{SC-VAMP_SSC} introduced spatially coupled VAMP (SC-VAMP) and proved that it is asymptotically capacity-achieving, and in \cite{Binomial_SPARCs}, SPARCs were shown to achieve capacity asymptotically with binomially distributed design matrices with maximum likelihood decoding. In \cite{err_prob_SC_SPARC}, the authors rigorously analyse the error probability of SPARCs on the AWGN channel in the non-asymptotic regime, and prove an exponentially decaying error probability with increasing block length. For low to medium code rates, clipped SPARCS were introduced in \cite{liang2021finite,irregularly_clipped_SPARCs} to improve their block error rate at finite lengths, and Block-orthogonal Sparse regression codes were introduced for very low code rates \cite{BOSS_2021,BOSS_2022,BOSS_2025}. 

Although the above papers suggest that large block length SPARCs are capacity-achieving, short block length codes are better suited for many low-latency applications. In modern communication systems, such as 5G networks, control channels usually employ short block length codes. Applications like IoT and massive machine-type communications (mMTC) require low latency, where short block length codes can prove to be helpful. Toward this end, finite block length analysis has gained significant attention \cite{2010_yury_finiteblocklength}. The reader is referred to the works in \cite{cocskun2019efficient, van2018short,shortblock_URLLC} for a study on the short block performance of existing codes. Deterministic sequences with very low correlation properties exist in the CDMA literature \cite{CDMA_dict1,CDMA_dict2} and in quantum information theory \cite{woottersMUB,MUB2}. Dictionary matrices formed from these sequences, specifically gold codes \cite{CDMA_dict1} and MUB sequences from \cite{woottersMUB}, have columns $L \approx N^2$, where $N$ is the length of the code. The maximum normalised correlation between different columns of the matrix, also known as mutual coherence, is approximately $1/\sqrt{N}$. When we choose the sparsity level $K$ of the sparse vector of the order of $\sqrt{N}$, the code rate falls in the regime of interest for very short block lengths. 

SPARCs for the short block length regime over the AWGN channel have been studied and analysed in \cite{2024_madhu_MAD}, where the authors have proposed the \emph{match and decode} (MAD) algorithm, based on maximum likelihood decoding. In the non-coherent SIMO setting, the sparse vector code (SVC) \cite{SVCji2018sparse}, which is closely related to SPARC, has been studied in \cite{ji2019pilot, PLOS_DL,SVC_IIOT}. In \cite{SVCji2018sparse,MMP_SSC}, the authors used the \emph{multi-path matching pursuit} algorithm, which extends the greedy \emph{orthogonal matching pursuit} (OMP) algorithm \cite{OMP, eldar2010block}. In \cite{PLOS_DL}, pilotless one-shot transmission and a deep neural network-based decoder were proposed. SVC had been applied for MIMO channels in \cite{SVC_MIMO}. Later, SVC had also been extended to encode additional bits using constellation symbols in \cite{ModulatedSVC}. 

\subsection{Our contribution}
In this work, we develop a SPARC decoder for the non-coherent flat fading channel with multiple receive antennas. The optimal maximum likelihood (ML) decoder for SPARCs is derived but is computationally infeasible. Inspired by the non-coherent ML metric, we develop a novel greedy SPARC decoder, which we refer to as the maximum likelihood matching pursuit (MLMP) algorithm. Similar to \cite{MMP_TIT2014}, we also introduce a parallel search mechanism for MLMP to mitigate the propagation of errors in the greedy algorithm. 
Due to the presence of multiple receive antennas, the information-bearing sparse signal in SPARC has a joint-sparse structure. 
Existing block sparse recovery algorithms, such as block orthogonal matching pursuit \cite{ji2019pilot} and block approximate message passing \cite{2021_Schniter_LAMP,2013_Donoho_BlockAMP}, can be employed to decode the SPARC.
The contributions of our work are summarised as follows:
\begin{itemize}
    \item We develop a non-coherent maximum likelihood (ML) detector for SPARCs in SIMO flat-fading channels. Since the search space of the ML decoder is prohibitively large, we develop a novel greedy algorithm for SPARC decoding using partial ML metrics.  While conventional greedy sparse signal recovery algorithms are based on successive cancellation techniques, our greedy maximum likelihood matching pursuit (MLMP) decoder is based on the successive \emph{combining} principle. We improve the MLMP algorithm by introducing a parallel search mechanism, which we call parallel-MLMP (P-MLMP).

    \item We develop guarantees on the decoding success of the MLMP algorithm based on the number of sections $K$ and the mutual coherence of the directional matrix. Interestingly, our guarantee coincides with that of the well-known OMP algorithm \cite{OMP}.
    

    \item We derive the approximate message passing decoder for SPARCs (SAMP) in a non-coherent flat-fading SIMO channel. We obtain the minimum mean squared error (MMSE) denoiser for SAMP, exploiting the SPARC structure that exactly one nonzero entry is present in each section. We show that the mean squared error (MSE) of SAMP at each iteration is accurately tracked by a scalar recursion known as \emph{state evolution}.

    \item We have shown through simulation that the proposed MLMP and P-MLMP significantly outperform the state-of-the-art sparse recovery algorithms such as Block-OMP \cite{OMP,eldar2010block} and its variant known as Modified Block-OMP, designed to take advantage of the structure of SPARC. We also show that parallelisation can further improve the BLER performance of these greedy algorithms, and more significantly at higher code rates.


    \item We also show that SPARC with the MLMP decoder outperforms existing error control codes such as Polar codes with pilot-aided transmission (PAT-Polar) and a non-coherent version of polar codes \cite{yuan2021polar}, in terms of block error rate (BLER) performance. 

\end{itemize}


\subsection{Outline of the paper}
The rest of the paper is organised as follows. In Section \ref{Background}, we describe the encoding of SPARC for the non-coherent SIMO flat-fading model. In Section \ref{MLMP section}, we describe the MLMP algorithm as an approximation of the ML detector. We also introduce Parallel-MLMP (P-MLMP) and Modified Block-OMP (MBOMP). In Section \ref{AMP section}, we describe the SPARC approximate message-passing algorithm (SAMP) under the current system model. In the next section \ref{Polar codes section}, we describe pilot-based and non-coherent polar codes. We also describe the sphere packing bound for the coherent SIMO channel and compare the computational complexities of the decoding algorithms discussed in this paper. Section \ref{Results section} shows the performance of MLMP and SAMP through simulations. We conclude with some closing remarks and future directions in section \ref{conclusion}, followed by a proof of Theorem \ref{Theorem1} in the appendix \ref{proof_thm1}.

\subsection{Notations}
Scalars are small case letters $x$, vectors, and matrices are in bold lower case $\mathbf{x}$ and upper case $\mathbf{X}$, respectively. $|x|$ represents the absolute value of $x$. $\mathbf{X}^*$ is the hermitian of $\mathbf{X}$, and $|\mbf{X}|$ denotes it determinant. $\|\mathbf{x}\|$ is the $\ell_2$ norm of $\mbf{x}$. $\langle \mathbf{a},\mathbf{b}\rangle$ is the inner product between $\mathbf{a}$ and $\mathbf{b}$. $\log$ and $\log_2$ represent the natural and base-2 logarithms, respectively. For any positive integer $N$, $[N]$ represents the set \{1, 2, \dots ,N\}.  $|\mathcal{X}|$ is the size of the set $\mathcal{X}$. $\mathcal{CN}(\mu,\sigma^2)$ is the complex Gaussian pdf with mean $\mu$ and variance $\sigma^2$. For a given matrix $\mbf{X}$, $\mbf{X}_{\mathcal{G},:}$ denotes the matrix formed from the rows of $\mbf{X}$ with indices from the set $\mathcal{G}$, and consequently $\mbf{X}_{k,:}$ denotes the row-$k$ of the matrix $\mbf{X}$.

\section{Background/Preliminaries} \label{Background}

\subsection{SPARC encoding} 
SPARCs are defined by a dictionary matrix $\mathbf{A}$ of size $N \times L$ where $L \geq N$. Let $\mathcal{A} = \{\mathbf{a}_1,\mathbf{a}_2, \dots ,\mathbf{a}_L\}$ denote the set of columns of $\mathbf{A}$. This matrix is partitioned into $K$ disjoint sections, with $\mathcal{A}_k$ denoting the set of columns in the $k$\textsuperscript{th} section and $\mathcal{Q}_k$ denoting the set of column indices in $\mathcal{A}_k$. $\sec(j)$ denotes the set of indices of all columns in the section that contains the column $\mbf{a}_j$. For example, if the column $\mbf{a}_j$ belongs to the first section $\mathcal{A}_1$, then $\sec(j) = \mathcal{Q}_1$. Sections can be of unequal sizes, and an optimal section partitioning algorithm from \cite{2024_madhu_MAD} can be used to maximise the code rate based on $L$ and $K$. The optimal partitioning algorithm outputs section sizes $L_k$ corresponding to section $\mathcal{A}_k$, which are a power of two for any $k$. 
Based on the information bits, the encoder selects one column from each section $\mathcal{A}_k$, $\forall k \in [K]$, and the codeword is formed as a sum of the columns chosen from all sections. Since each column in $\mathcal{A}_k$ can be indexed using $\log_2{L_{k}}$ bits, the number of bits conveyed by the SPARC codeword is $N_b = \sum_{k=1}^{K}\log_{2} L_k$. Let the support set $\mathcal{S} \subset [L]$ contain the indices of the columns chosen from $\mathcal{A}$. Then, we can represent the SPARC codeword as 
\begin{equation}
    \mathbf{s}  = \sum_{m \in \mathcal{S}} \mathbf{a}_{m} = \mathbf{Ax},
\end{equation}
where $\mathbf{x}$ is a $K$-sparse column vector with exactly $K$ nonzero entries (equal to one) at the corresponding indices of the chosen columns. Note that we do not use modulation symbols in the nonzero locations of $\mbf{x}$ since the codeword is transmitted over the fading channel, and this information will be lost. Let $\mathcal{C}$ denote the complete set of all possible codewords that can be formed. Since each section contains $L_{k}$ columns, $\forall k \in [K]$, the total number of codewords in the set $\mathcal{C}$ is $|\mathcal{C}| = \prod_{k=1}^{K}L_k$. If all sections are of equal size, then the total number of codewords is $|\mathcal{C}| = \left ( \frac{L}{K} \right )^K$. The code rate $R$ for SPARC is measured in bits per real channel use (bpcu), where $R = N_b/N$ bpcu for a real dictionary matrix and $R = N_b/(2N)$ bpcu for a complex dictionary matrix.

\subsection{MUB Matrices}
The mutual coherence property of a dictionary matrix defined as
\begin{equation} \label{mutual_coherence}
    \mu(\mathbf{A}) = \max_{ i \neq j} \frac{|\langle \mathbf{a}_i,\mathbf{a}_j \rangle|}{\|\mathbf{a}_i\| \, \|\mathbf{a}_j\|}, 
\end{equation}
plays an important role in sparse signal recovery performance. We use dictionary matrices constructed using mutually unbiased bases (MUBs) \cite{woottersMUB}, with the number of columns $L=N^2$ and $\mu(\mathbf{A}) = \frac{1}{\sqrt{N}}$. Two orthonormal bases $\mathcal{B}_1$ and $\mathcal{B}_2$ in an $N-$dimensional complex inner product space $\mathbb{C}^N$ are mutually unbiased if and only if $|\langle \mathbf{p,q} \rangle | = \frac{1}{\sqrt{N}}$ for any $\mathbf{p} \in \mathcal{B}_1$ and $\mathbf{q} \in \mathcal{B}_2$. When $N$ is a power of a prime number, explicit constructions for MUBs have been given in \cite{woottersMUB}. 


\subsection{SPARC for unknown fading channel}
We consider a non-coherent fading model where both the transmitter and the receiver lack channel state information (CSI). We assume a flat-fading channel, where the fading coefficients remain constant over the entire duration of the codeword and change independently over different codewords. This is a particularly valid assumption, since we are working with very \emph{short block lengths}. This assumption becomes weaker with increasing block lengths. We aim to build computationally efficient non-coherent decoders for SPARCs in SIMO flat-fading channels. With $D$ receive antennas, the observation vector $\mathbf{y}_i \in \mathbb{C}^N$ of size $(N \times 1)$ at each receive antenna is,
\begin{equation} \label{MSPARC_MMV}
    \mathbf{y}_i = h_i \mathbf{s} + \mathbf{v}_i,  \hspace{2mm} \forall i \in [D],
\end{equation}
where $h_i \sim \mathcal{CN}(0,\sigma_h^2)$ is an unknown complex gaussian fading coefficient, and $\mbf{v}_i \sim \mathcal{CN}(0,\sigma_v^2\mathbb{I}_N)$ is the complex white gaussian noise (CWGN) at the  $i$\textsuperscript{th} receive antenna.
Given that the receiver observes $\mathbf{y}_i, \hspace{1mm} \forall i \in [D]$ and has knowledge of $\mathbf{A}$, the decoder's task is to efficiently detect the locations of the columns of $\mathbf{A}$ that formed the codeword. 

In sparse signal recovery, a significant body of literature \cite{marques2018review,blanchard2014greedy} frequently highlights the computational superiority of greedy algorithms compared to conventional convex optimisation-based $\ell_1$ minimisation techniques. In this interest, we propose a greedy algorithm based on ML detection for the model in (\ref{MSPARC_MMV}) discussed in the next subsection.


\section{Maximum likelihood matching pursuit} \label{MLMP section}

\subsection{Maximum Likelihood Detector} \label{ML detector}
The model for SPARCs through an unknown i.i.d. flat-fading SIMO channel is reiterated here as,
\begin{equation} 
    \begin{split} 
        \mathbf{y}_i &= h_i\mathbf{s} + \mathbf{v}_i, \hspace{2mm} \forall i \in [D]. \nonumber
    \end{split}
\end{equation} 
The ML detector for $\mathbf{s}$ from the observations $\mathbf{y}_i, \forall i \in [D]$ is, 
\begin{align}
    \hat{\mathbf{s}} &= \argmax_{\mathbf{s} \in \mathcal{C}} \hspace{1mm} \log p(\mathbf{y}_1, \dots ,\mathbf{y}_D | \mathbf{s})  = \argmax_{\mathbf{s} \in \mathcal{C}} \hspace{1mm}\sum_{i=1}^{D}  \log(p(\mathbf{y}_i|\mathbf{s})). \label{channel_iid} 
\end{align}
From (\ref{MSPARC_MMV}), the conditional distribution $\mathbf{y}_i|\mathbf{s}$ is 
a zero mean complex gaussian with covariance matrix $\mathbf{C}$, i.e.,
\begin{equation}
    \mathbf{y}_i|\mathbf{s} \sim  \hspace{1mm} \mathcal{CN}(0, \underbrace{\sigma_h^2\mathbf{s}\mathbf{s}^* + \sigma_v^2\mathbb{I}_N}_{\mathbf{C}}), \forall i \in [D]. 
\end{equation}
The determinant of $\mbf{C}$ is a product of its eigenvalues, 
\begin{equation}
    |\mathbf{C}| = \sigma_v^{2(N-1)} (\sigma_h^2 \|\mathbf{s}\|^2 + \sigma_v^2). \label{C_det}
\end{equation}
The inverse of $\mbf{C}$ can be calculated using the  Woodbury identity as follows,
\begin{equation}
    \mathbf{C}^{-1} = \frac{1}{\sigma_v^2} \left( \mathbb{I}_N - \frac{\sigma_h^2}{\sigma_v^2+ \sigma_h^2 \|\mathbf{s}\|^2}\mathbf{s}\mathbf{s}^* \right). \label{C_inv}
\end{equation}
Substituting (\ref{C_det}) and (\ref{C_inv}) back into 
the ML detector in (\ref{channel_iid}) we get,
\begin{align} \label{ML_det}
    \mathbf{\hat{s}} 
    &= \argmax_{\mathbf{s} \in \mathcal{C}} \hspace{1mm} \beta_{\mathbf{s}} \sum_{i=1}^{D}| \langle \mathbf{y}_i , \mathbf{s} \rangle|^2 - D\gamma_{\mathbf{s}},
\end{align}
with 
\begin{equation} \label{beta_gamma}
    \beta_{\mathbf{s}}=\frac{\sigma_h^2/\sigma_v^2}{\sigma_v^2 + \sigma_h^2 \|\mathbf{s}\|^2}, \hspace{1mm} \gamma_{\mathbf{s}}= \log \left ( \frac{\sigma_h^2}{\sigma_v^2} \|\mathbf{s}\|^2 + 1 \right ).
\end{equation}

\subsection{Maximum Likelihood Matching Pursuit}
To find the ML estimate of $\mathbf{s}$, we need to compute the metric in (\ref{ML_det}) for all possible codewords in $\mathcal{C}$. 
For example, the multi-antenna ML detector can be elaborately represented as,
\begin{multline}
    \hat{\mathbf{s}} =\argmax_{m_1,m_2 \dots ,m_K} \beta_{\mathbf{s}}\sum_{i=1}^{D}  |\mathbf{y}_i^*\mathbf{a}_{m_1} + \cdots + \mathbf{y}_i^*\mathbf{a}_{m_K}|^2 - D\gamma_{\mathbf{s}}, \label{MLMP_metric_expanded}
\end{multline}
where, $\mathcal{S} = \{m_1, m_2, \dots ,m_K \}$ is the support set that forms the codeword. Solving (\ref{MLMP_metric_expanded}) is infeasible even for small block lengths, as $|\mathcal{C}|$ is quite large. So, we introduce an intelligent way to approximate the ML detector into an iterative decoder named the maximum likelihood matching pursuit (MLMP) algorithm. MLMP iteratively finds the columns that make up the codeword, one at a time. 

At the receive antenna $i,\forall i \in [D]$, the observed signal can be represented as
\begin{align}  
    \mathbf{y}_i &= h_i\mathbf{s} + \mathbf{v}_i ~=~ h_i\mathbf{a}_{m_1} + h_i\sum_{m \in \mathcal{S} \symbol{92} m_1}\mathbf{a}_{m} + \mathbf{v}_i. \label{MLMP_iter1}
\end{align}

In the first iteration, the aim is to detect only one column (w.l.o.g., we consider it as the first one in (\ref{MLMP_iter1})) that made up the codeword $\mathbf{s}$. While estimating this column, the rest of the $|\mathcal{S} \symbol{92} m_1| = K-1 $ undetected columns act as interference. Hence, we set the  effective noise variance in the first iteration  as,  
\begin{equation} \label{effective_noise}
    \Tilde{\sigma}_{v,1}^2 = \sigma_v^2 + \frac{\sigma_h^2}{N}\| \sum_{m \in \mathcal{S} \symbol{92} m_1} \mathbf{a}_m\|^2,
\end{equation}
where, 
\begin{align}
    \|\sum_{m \in \mathcal{S} \symbol{92} m_1} \mathbf{a}_m\|^2 &= \langle \sum_{m \in \mathcal{S} \symbol{92} m_1} \mathbf{a}_m,\sum_{m \in \mathcal{S} \symbol{92} m_1} \mathbf{a}_m \rangle,  \nonumber \\
    &= \sum_{m \in \mathcal{S} \symbol{92} m_1} \|\mathbf{a}_m\|^2 + \sum_{p} \sum_{q \neq p} \langle \mathbf{a}_p,\mathbf{a}_q \rangle. \label{crossterms}
\end{align}  

The cross terms in (\ref{crossterms}) can be bounded using the mutual coherence of the matrix $\mu(\mathbf{A})$ in (\ref{mutual_coherence}). Hence, the sum in (\ref{crossterms}) lies in the interval $$[ (K-1) - (K-1)(K-2)\mu(\mathbf{A}) , (K-1) + (K-1)(K-2)\mu(\mathbf{A})].$$ So, we approximate the sum by taking the mid-point of the interval, and the effective noise variance in  (\ref{effective_noise}) is taken as 
\begin{equation}
    \Tilde{\sigma}_{v,1}^2 \approx \sigma_v^2 + \frac{\sigma_h^2}{N}(K-1).
\end{equation}
In the first iteration of MLMP, we are trying to find a single active column from the codeword by searching over all $\mathbf{a}_m \in \mathcal{A}$. Since all columns have unit norm, the terms $\beta$ and $\gamma$  in the metric (\ref{ML_det}) remain the same $\forall \mathbf{a}_m \in \mathcal{A}$.  Hence, MLMP selects the first active column as
\begin{equation}
    \mathbf{a}_{{\hat{m}_1}} = \argmax_{\mathbf{a}_{m} \in \mathcal{A}} \sum_{i=1}^D|\langle \mathbf{y}_i,\mathbf{a}_{m} \rangle|^2.
\end{equation}



In iteration $k$, the detected columns from the first $k-1$ iterations are assumed to be correct, and we try to find the $k^{th}$ column that is part of the codeword. Hence, the observed signal can be written as,
\begin{equation}
    \mathbf{y}_i = h_i ( \sum_{\ell=1}^{k-1}{\mathbf{a}}_{\hat{m}_\ell}) + h_i \sum_{m \in \mathcal{S} \setminus 
    \{ \hat{m}_1, \dots ,\hat{m}_{k-1}\}}  \mathbf{a}_m + \mathbf{v}_i, \forall i \in [D].
\end{equation}
Now, the remaining $K-k$ columns act as interference and, hence, the effective noise variance is $\Tilde{\sigma}_{v,k}^2 = \sigma_v^2 + \frac{\sigma_h^2}{N} (K-k)$. Inspired by the metric in (\ref{MLMP_metric_expanded}), we use the successive combining method by including the previously detected columns as part of the metric and proceeding to find the next column that is part of the codeword. Hence, $k$\textsuperscript{th} column that is part of the codeword is chosen as,  
\begin{multline} \label{MLMP_kth_iter}
    {\mathbf{a}}_{\hat{m}_k} = \argmax_{\mathbf{a}_{m} \in \mathcal{A} \setminus \{\mathcal{\hat{A}}_{1} \cup \cdots \cup \mathcal{\hat{A}}_{{k-1}} \}} \beta_{k,m} \sum_{i=1}^{D} |\langle \mathbf{y}_i, \sum_{\ell =1}^{k-1} {\mathbf{a}}_{\hat{m}_\ell} + \mathbf{a}_{m} \rangle|^2  \\ - D\gamma_{k,m},
\end{multline}
where, 
\begin{align} \label{beta_gamma_iterk}
    \beta_{k,m} &= \frac{\sigma_h^2/\Tilde{\sigma}_{v,k}^2}{\Tilde{\sigma}_{v,k}^2 + \sigma_h^2\|\sum_{\ell=1}^{k-1} {\mathbf{a}}_{\hat{m}_\ell} + \mathbf{a}_{m}\|^2}, \hspace{1mm}  \\
    \gamma_{k,m} &= \log \left ( \frac{\sigma_h^2}{\Tilde{\sigma}_{v,k}^2} \|\sum_{\ell=1}^{k-1} {\mathbf{a}}_{\hat{m}_\ell} + \mathbf{a}_{m}\|^2 + 1 \right ). 
\end{align}

The MLMP algorithm stops after K iterations. Since the active columns of the codeword are obtained greedily, the search space of MLMP is linear in $K$, while that of the optimal ML detector grows exponentially with $K$.  


\subsection{Parallel-MLMP}
In the MLMP algorithm, detection in the first iteration is the most susceptible to errors, as the remaining $K-1$ undetected columns act as interference. Given the iterative nature of greedy algorithms like MLMP, any error occurring in the first iteration tends to propagate through subsequent iterations, leading to errors in detecting the subsequent columns \cite{MMP_TIT2014}.

To overcome this, we introduce parallel-MLMP (P-MLMP), which chooses the top `$P$' columns with the highest ML metrics in the first iteration. We then run the MLMP algorithm $P$ times in parallel, assuming one of the top $P$ columns from the first iteration as an active column of the codeword. We get $P$ different codewords from these $P$ parallel MLMP decoders.  Among these $P$ candidates, we select the candidate with the largest ML metric as the final chosen codeword. Algorithm \ref{alg:parallel-mlmp} shows the working for Parallel-MLMP.

\begin{algorithm}[h]
\caption{Parallel-MLMP} \label{alg:parallel-mlmp} 
\begin{algorithmic}[1]


\STATE \textbf{Input}: $\mathbf{y}, \mathbf{A}, K, P$.


    \STATE \textbf{First metric}:
    $q(m)= \sum_{i \in [D]}|\langle \mathbf{y}_i,\mathbf{a}_m \rangle|^2$, $\forall \mathbf{a}_m \in \mathcal{A}$.

    \STATE \textbf{Let $\mathcal{J} = (j_1,...,j_P)$ denote the set of indices corresponding to the top $P$ values of $q(m)$ such that $q(j_1)\geq q(j_2) \geq \cdots q(j_P) \geq q(m)$ for any $m \in [L]\setminus\mathcal{J}$}.
    
    \STATE \textbf{Initialise parallel path index}: $n=1$.
    
    \STATE \textbf{Set the first active column (as one of the top $P$ candidates obtained previously)}: 
    $\mathbf{a}_{\hat{m_1}} = \mathbf{a}_{j_n}$.

    \STATE \textbf{Run MLMP algorithm and find the remaining $K-1$ active columns}:
    Denote the detected codeword as $\hat{\mathbf{s}}_n$. 

    \STATE \textbf{Update:}  $n=n+1$; if $n \leq P$ go back to step 5.


    \STATE \textbf{Select best among the $P$ codewords based on ML metric from }(\ref{ML_det}): $\hat{\mathbf{s}} = \argmax_{1 \leq n \leq P} p(\mathbf{y}_1,\cdots,\mathbf{y}_D | \hat{\mathbf{s}}_n)$.

\end{algorithmic}
\end{algorithm}

\subsection{Performance Guarantees of MLMP Algorithm}
In this section, we describe some important sparse recovery bounds for the MLMP algorithm for non-coherent decoding of SPARCs in SIMO flat-fading channels.

\begin{lemma}
For the non-coherent SIMO flat fading channel, the MLMP coincides with the maximum likelihood detector for SPARCs when sparsity $K=1$.
\end{lemma}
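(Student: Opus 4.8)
The plan is to show that when $K=1$, the MLMP selection rule in its first (and only) iteration reduces exactly to the ML detector derived in equation (\ref{ML_det}). First I would observe that with $K=1$, the codeword set $\mathcal{C}$ consists precisely of the $L$ individual columns of $\mathbf{A}$, since each codeword is a sum of one chosen column from one section, and there is only one section; that is, $\mathbf{s} = \mathbf{a}_m$ for some $m \in [L]$. So the ML detector in (\ref{ML_det}) ranges over exactly the same candidate set $\mathcal{A}$ that MLMP's first-iteration search ranges over.

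Next I would examine the ML metric $\beta_{\mathbf{s}} \sum_{i=1}^D |\langle \mathbf{y}_i, \mathbf{s}\rangle|^2 - D\gamma_{\mathbf{s}}$ for $\mathbf{s} = \mathbf{a}_m$. Since the dictionary columns all have unit norm, $\|\mathbf{s}\|^2 = \|\mathbf{a}_m\|^2 = 1$ for every candidate, so $\beta_{\mathbf{s}}$ and $\gamma_{\mathbf{s}}$ in (\ref{beta_gamma}) are constants independent of $m$ — call them $\beta_0 = (\sigma_h^2/\sigma_v^2)/(\sigma_v^2 + \sigma_h^2)$ and $\gamma_0 = \log(\sigma_h^2/\sigma_v^2 + 1)$. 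The $\argmax$ over $m$ of $\beta_0 \sum_i |\langle \mathbf{y}_i,\mathbf{a}_m\rangle|^2 - D\gamma_0$ is therefore the same as the $\argmax$ of $\sum_i |\langle \mathbf{y}_i,\mathbf{a}_m\rangle|^2$, because an additive constant and a positive multiplicative constant do not affect the maximizer.

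Then I would match this against the MLMP first-iteration rule. When $K=1$, the effective noise variance is $\tilde\sigma_{v,1}^2 = \sigma_v^2 + \frac{\sigma_h^2}{N}(K-1) = \sigma_v^2$ (there are no interfering columns), the partial sum $\sum_{\ell=1}^{k-1}\mathbf{a}_{\hat m_\ell}$ is empty, and the algorithm terminates after one iteration. Thus the MLMP rule is exactly $\mathbf{a}_{\hat m_1} = \argmax_{\mathbf{a}_m \in \mathcal{A}} \sum_{i=1}^D |\langle \mathbf{y}_i, \mathbf{a}_m\rangle|^2$, which — by the previous paragraph — returns the same codeword as the ML detector (\ref{ML_det}). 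Hence MLMP and ML coincide.

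There is no real obstacle here; the statement is essentially a consistency check that MLMP degenerates correctly. The only point requiring a modicum of care is the argument that dropping the constants $\beta_0$ and $\gamma_0$ preserves the maximizer, and noting that for $K=1$ there is genuinely no approximation in the effective-noise heuristic (since $K-1=0$ makes the interference term vanish identically rather than being approximated by a midpoint). I would also remark, for completeness, that if the $\argmax$ is not unique both detectors can be taken to use the same tie-breaking rule, so the coincidence is exact.
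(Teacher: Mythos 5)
Your proposal is correct and follows essentially the same route as the paper's proof: both note that for $K=1$ the codeword set is just the columns of $\mathbf{A}$, that $\beta_{\mathbf{s}}$ and $\gamma_{\mathbf{s}}$ are constant across unit-norm candidates so the ML rule reduces to maximizing $\sum_{i=1}^{D}|\langle\mathbf{y}_i,\mathbf{a}_m\rangle|^2$, and that this is exactly the single iteration MLMP performs. The extra observations you add (the effective noise variance collapsing to $\sigma_v^2$ and the tie-breaking remark) are fine but not needed.
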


\begin{proof}
When a SPARC codeword is formed with $K=1$, the codeword is equal to one of the columns of the dictionary matrix $\mbf{A}$. Hence, the maximum likelihood detector for SPARCs with one section reduces to, 
\begin{align}
    \hat{\mathbf{s}} &= \argmax_{\mbf{s}} \sum_{i=1}^{D} |\langle \mathbf{y}_i,\mathbf{s} \rangle|^2, \\
    &= \argmax_{\mbf{a}_m \in \mathcal{A}} \sum_{i=1}^{D} |\langle \mathbf{y}_i,\mathbf{a}_m \rangle|^2,
\end{align}
since, $\beta_{\mbf{s}}$ and $\gamma_{\mbf{s}}$ are equal for all $\mbf{a}_m \in \mathcal{A}$. This coincides with the first iteration of the MLMP algorithm, and since $K=1$, we run the MLMP algorithm for only 1 iteration. 
\end{proof}

We now consider the sparse recovery guarantee of the MLMP algorithm in the absence of additive complex white gaussian noise (CWGN). 
\begin{theorem} \label{Theorem1}
    For SPARC codes defined on a dictionary matrix $\mbf{A}$ with mutual coherence $\mu$, the MLMP decoder with fixed $\beta=1$ and $\gamma=0$ for all iterations, recovers the support perfectly in the absence of CWGN if, 
    \begin{equation} \label{K bound}
        K \leq \frac{1+\mu}{2\mu}.
    \end{equation}
\end{theorem}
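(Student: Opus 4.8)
The plan is to argue by induction on the MLMP iteration index that every selected column lies in the true support $\mathcal{S}$. First I would record the noiseless reduction: with $\mathbf{v}_i=\mathbf{0}$ we have $\mathbf{y}_i=h_i\mathbf{s}$, so $\sum_{i=1}^{D}|\langle\mathbf{y}_i,\mathbf{v}\rangle|^2=\big(\sum_{i=1}^{D}|h_i|^2\big)\,|\langle\mathbf{s},\mathbf{v}\rangle|^2$ for every $\mathbf{v}$; since $\sum_i|h_i|^2>0$ almost surely and $\beta=1,\gamma=0$, the selection rule (\ref{MLMP_kth_iter}) at iteration $k$ becomes $\hat m_k=\argmax_{m}\,|\langle\mathbf{s},\mathbf{z}_{k-1}+\mathbf{a}_m\rangle|^2$, where $\mathbf{z}_{k-1}=\sum_{\ell=1}^{k-1}\mathbf{a}_{\hat m_\ell}$ (with $\mathbf{z}_0=\mathbf{0}$) and $m$ ranges over the columns in the not-yet-used sections. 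All columns are unit norm, so every bound below reduces to counting inner products and bounding each one by $\mu$.

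Base case ($k=1$): this is exactly the classical first-correlation (OMP) step. For $m^\star\in\mathcal{S}$, $\langle\mathbf{s},\mathbf{a}_{m^\star}\rangle=1+\sum_{p\in\mathcal{S}\setminus\{m^\star\}}\langle\mathbf{a}_p,\mathbf{a}_{m^\star}\rangle$, hence $|\langle\mathbf{s},\mathbf{a}_{m^\star}\rangle|\ge 1-(K-1)\mu$, while $|\langle\mathbf{s},\mathbf{a}_m\rangle|\le K\mu$ for $m\notin\mathcal{S}$; therefore $\hat m_1\in\mathcal{S}$ whenever $1-(K-1)\mu\ge K\mu$, which is precisely $K\le\frac{1+\mu}{2\mu}$. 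For the inductive step, assume $\hat m_1,\dots,\hat m_{k-1}\in\mathcal{S}$ (then they lie in $k-1$ distinct sections). Pick a remaining true column $m^\star\in T:=\mathcal{S}\setminus\{\hat m_1,\dots,\hat m_{k-1}\}$ and set $M:=\langle\mathbf{s},\mathbf{z}_{k-1}+\mathbf{a}_{m^\star}\rangle$, the metric MLMP assigns to $m^\star$. The key algebraic observation is that for \emph{any} wrong candidate $m'\notin\mathcal{S}$,
\begin{equation}
\langle\mathbf{s},\mathbf{z}_{k-1}+\mathbf{a}_{m'}\rangle=M-\langle\mathbf{s},\mathbf{a}_{m^\star}\rangle+\langle\mathbf{s},\mathbf{a}_{m'}\rangle=M-1+\zeta,\qquad |\zeta|\le(2K-1)\mu,
\end{equation}
since $\langle\mathbf{s},\mathbf{a}_{m^\star}\rangle=1+F_{m^\star}$ with $|F_{m^\star}|\le(K-1)\mu$ and $|\langle\mathbf{s},\mathbf{a}_{m'}\rangle|\le K\mu$. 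Thus it suffices to prove $|M|>|M-1|+(2K-1)\mu$: this gives $|\langle\mathbf{s},\mathbf{z}_{k-1}+\mathbf{a}_{m'}\rangle|<|M|$ for every wrong $m'$, forcing $\hat m_k\in\mathcal{S}$, and the induction closes after $K$ steps, i.e.\ perfect recovery.

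The crux — and the step I expect to be the main obstacle — is establishing $|M|-|M-1|>(2K-1)\mu$ from the coherence bounds, because naive triangle-inequality estimates of $|M|$ and $|M-1|$ are far too lossy (they would only yield a condition quadratic in $K$). Writing $\mathbf{w}=\mathbf{z}_{k-1}+\mathbf{a}_{m^\star}$ (a sum of $k$ support columns) and $\mathbf{u}=\mathbf{s}-\mathbf{w}$ (the other $K-k$ support columns), one has $M=\|\mathbf{w}\|^2+\langle\mathbf{u},\mathbf{w}\rangle$ with $\|\mathbf{w}\|^2\ge k-k(k-1)\mu$ and $|\langle\mathbf{u},\mathbf{w}\rangle|\le k(K-k)\mu$, so $\operatorname{Re}(M)\ge k\big(1-(K-1)\mu\big)$ and $|\operatorname{Im}(M)|\le k(K-k)\mu$; in particular $\operatorname{Re}(M)>1$ for $k\ge2$ under the hypothesis. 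Using $|M|-|M-1|=\tfrac{2\operatorname{Re}(M)-1}{|M|+|M-1|}$ together with $|M|\le\operatorname{Re}(M)+\tfrac{\operatorname{Im}(M)^2}{2\operatorname{Re}(M)}$ and $|M-1|\le\operatorname{Re}(M)-1+\tfrac{\operatorname{Im}(M)^2}{2(\operatorname{Re}(M)-1)}$, the claim reduces to a scalar inequality in $\operatorname{Re}(M),\operatorname{Im}(M)$ and $\mu$. For a real dictionary $\operatorname{Im}(M)\equiv0$ and it is immediate ($|M|-|M-1|=1>(2K-1)\mu$); in the complex case one must additionally exploit that $\operatorname{Re}(M)-\|\mathbf{w}\|^2$ and $\operatorname{Im}(M)$ are the real and imaginary parts of the \emph{same} inner product $\langle\mathbf{u},\mathbf{w}\rangle$ of magnitude at most $k(K-k)\mu$, and choose $m^\star$ to be the best remaining true column, so that a near-worst-case real part of $M$ cannot coexist with a near-worst-case imaginary part. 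Carrying out this estimate, and checking that the constraints it imposes for $k\ge2$ are all implied by the $k=1$ condition $K\le\frac{1+\mu}{2\mu}$ (I expect the first iteration to remain the binding one), completes the proof.
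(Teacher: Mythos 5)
Your proposal reproduces the paper's proof structure almost exactly: the same noiseless reduction (the common factor $\sum_{i}|h_i|^2$ cancels out of the metric), the same induction over iterations, the same isolation of the common term $\langle\mbf{s},\mbf{z}_{k-1}\rangle$ shared by correct and incorrect candidates, and the same coherence bounds ($1-(K-1)\mu$ for a true column's extra inner product versus $K\mu$ for a wrong one). Your reduction of the inductive step to the single inequality $|M|>|M-1|+(2K-1)\mu$ is, after relabelling, precisely the paper's condition $|(\alpha+1)+j\zeta|-(K-1)\mu\ge|\alpha+j\zeta|+K\mu$, and your base case is its $k=0$ instance.

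The gap is that you never actually establish that inequality, and the estimates you assemble cannot establish it. From $|M|-|M-1|=\frac{2\Re(M)-1}{|M|+|M-1|}$ with $\Re(M)\ge k(1-(K-1)\mu)$ and $|\Im(M)|$ allowed to be as large as $k(K-k)\mu$, the denominator can exceed $2\Re(M)-1$ by roughly $2|\Im(M)|$, so the worst-case lower bound on the quotient is about $1-\frac{2k(K-k)\mu}{2k+1}$; for example with $\mu=0.1$, $K=5$, $k=2$ this bound is about $0.54$ while $(2K-1)\mu=0.9$, so the required inequality does not follow from the listed bounds. Your proposed rescues (exploiting that $\Re$ and $\Im$ come from the same inner product $\langle\mbf{u},\mbf{w}\rangle$, and choosing $m^\star$ as the best remaining true column) are stated as intentions, not carried out, and it is not evident they suffice. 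To be fair, you have put your finger on exactly the step where the paper's own argument is weakest: the paper passes from $|(\alpha+1)+j\zeta|-(K-1)\mu\ge|\alpha+j\zeta|+K\mu$ to $1+2\alpha-(K-1)\mu\ge K\mu$ as though $|(\alpha+1)+j\zeta|-|\alpha+j\zeta|$ were governed by $1+2\alpha$, which is justified only when $\zeta=0$ — i.e., the real case you already dispose of. So your diagnosis of the crux is correct and more candid than the paper's treatment, but as a standalone proof your proposal is incomplete at precisely that point.
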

\begin{proof}
The proof of Theorem \ref{Theorem1} is given in Appendix \ref{proof_thm1}. \end{proof}
We observe that this bound coincides with that of orthogonal matching pursuit (OMP) for $K$-sparse signals \cite{OMP_bound}. MUB matrices of size $N \times N^2$ have mutual coherence $1/\sqrt{N}$, which, according to the bound, means that MLMP works for $K < 1 + \frac{\sqrt{N}}{2}$. The code rate of SPARC approaches zero as $N \rightarrow \infty$. Hence, only SPARCs with small code rates can be supported with greedy decoders at large block lengths.

\subsection{Modified Block-OMP (MBOMP)} 
\label{mbomp_section}
In \cite{ji2019pilot}, SPARC in unknown SIMO fading channels was studied with a block-OMP (BOMP) \cite{eldar2010block} based decoder. The BOMP algorithm is based on the successive cancellation method, in which it estimates unknown fading channels based on the detected columns and then \emph{removes} their contribution from the observation to compute the residual. For the flat-fading model, the modified-BOMP (MBOMP) algorithm is obtained by improving the channel estimation step as given below. The channel estimate $\hat{h}_{i,k}$ and the residual $\mathbf{r}_i^{(k)}$ based on the columns detected so far $\{\mathbf{a}_{\hat{m}_1},\cdots,\mathbf{a}_{\hat{m}_{k}}\}$ are computed as,
\begin{align} 
\hat{h}_{i,k} = \frac{\langle \mathbf{y}_{i}, \sum_{\ell=1}^{k} \mathbf{a}_{\hat{m}_\ell}\rangle }{\| \sum_{\ell=1}^{k} \mathbf{a}_{\hat{m}_\ell}\|^2}, \hspace{1mm} \mathbf{r}_i^{(k+1)} = \mathbf{y}_{i} - \hat{h}_{i,k} \sum_{\ell=1}^{k} \mathbf{a}_{\hat{m}_\ell}. \nonumber
\end{align}
The next active column is selected as $$\mathbf{a}_{\hat{m}_{k+1}} = \argmax_{\mathbf{a}_{m} \in \mathcal{A} \setminus \{\mathcal{\hat{A}}_{1} \cup \cdots \cup \mathcal{\hat{A}}_{{k}} \}} \sum_{i=1}^D |\langle \mathbf{r}_i^{(k+1)},\mathbf{a}_m\rangle|^2.$$ 
This OMP-based algorithm is susceptible to channel estimation errors, especially at low signal-to-noise ratio levels. On the other hand, MLMP in ~(\ref{MLMP_kth_iter}), which is based on a non-coherent ML metric and successive combining method, does not involve any channel estimation.



\section{Approximate message assing for simo sparcs (SAMP)} \label{AMP section}

SPARCs have gained significant interest as a promising alternative for future wireless technologies, due to the fact that they are capacity-achieving. This was proven in \cite{2017_Crush_unmod_sparc} for SPARCs communicating over the AWGN channel, using the approximate message passing (AMP) decoder\cite{2009_Donoho_CS_AMP,2013_Donoho_BlockAMP,2021_Schniter_LAMP}. Specifically, the authors proved that at asymptotic block lengths, SPARCs can achieve arbitrarily low section error rates for code rates $R<C$. Similarly, the authors in \cite{2017_barbier_AMP_SPARCs} have also proved the capacity-achieving nature of SPARCs in AWGN channel using spatially coupled design matrices and rigorously analysed them using statistical-physics-based methods like the replica method. However, the empirical performance of SPARCs at short block lengths is far from capacity. In this regard, we introduce SPARC-AMP (SAMP) for non-coherent decoding of short block length SPARCs over i.i.d. flat-fading SIMO channels, and compare it against greedy methods like MLMP and MBOMP. We use the MMSE estimator for a SPARCs through the SIMO channel as the denoising function in SAMP.

The SIMO flat fading channel model in (\ref{MSPARC_MMV}) can also be represented in matrix form as, 
\begin{equation} 
    \mathbf{Y = Axh}^T+\mbf{V = AG + V},
    \label{MMV system model}
\end{equation} 
where the columns of $\mbf{Y} \in \mathbb{C}^{N\times D}$ are the observations at the multiple antennas of the receiver, and $\mbf{G}\in \mathbb{C}^{L\times D}$ is the true row-sparse matrix. The AMP decoder iteratively generates estimates of $\mathbf{G}$ denoted by $\{\hat{\mathbf{G}}^t\}_{t > 0}$, from $\mbf{Y}$. By setting the initial $\hat{\mathbf{G}}^0$ to an all-zero matrix, the SAMP recursion goes as follows:
\begin{align} 
    \mathbf{Z}^t &= \mathbf{Y - A}\hat{\mathbf{G}}^t + \mbf{O}^{t-1}, \label{AMP1} \\
    \mathbf{B}^t &= \hat{\mathbf{G}}^{t} + \mathbf{A^*Z}^t, \label{AMP2}\\
    \hat{\mathbf{G}}^{t+1} &= \eta_t(\mbf{B}^t ;\tau_t^2). \label{AMP3}
\end{align}
where quantities with negative indices are set to zero. The functions $\{\eta_t(\cdot;\tau_t^2)\}_{t\geq 0}$ act as denoising functions. The last term of (\ref{AMP1}), i.e., $\mbf{O}^{t-1}$, is known as the \emph{Onsager correction term} which is calculated as,
\begin{equation}
    \mbf{O}^{t-1} = \frac{L}{N}\mathbf{Z}^{t-1} \langle\!\langle \eta'_{t-1}(\mathbf{A^*Z}^{t-1} + \hat{\mbf{G}}^{t-1};\tau_{t-1}^2)\rangle\!\rangle,\label{onsager_term}
\end{equation}
where $\eta'_t(\cdot;\tau_t^2)$ denotes the derivative of the denoising function with respect to its first argument, and $\langle\!\langle \cdot \rangle\!\rangle$ denotes the average of the input argument.
Due to the presence of the onsager term, the input to the denoising function is assumed to be distributed as follows:
\begin{equation}
   \mbf{B}^t = \hat{\mbf{G}}^{t} + \mathbf{A^*Z}^t \approx  \mathbf{G} + \tau_t \mathbf{W} ,
\end{equation}
with $\mathbf{G}$ being the true row-sparse matrix, and each element of $\mbf{W}$ distributed as $\mathcal{CN}(0,1)$. This implies that the input to the denoiser is distributed as the true signal corrupted by a complex gaussian noise with variance $\tau_t^2$. These $\{\tau_t^2\}_{t\geq0}$ are known as \emph{state evolution} (SE) parameters. This distribution is crucial for the design of the MMSE estimator and AMP convergence, and is possible due to the Onsager correction term in (\ref{AMP1}).

\subsubsection{MMSE Denoiser}
We use an MMSE estimator as the denoising function in (\ref{AMP3}) and it produces a row-wise output from its input $\mbf{B}^t$ as follows,
\begin{equation}
    \hat{\mbf{G}}_{k,:}^{t+1} = \mathbb{E}[ \mbf{G}_{k,:} | \mathbf{B}_{\sec(k),:}^t = \mbf{G}_{\sec(k),:} + \tau_t\mbf{W}_{\sec(k),:} ], \forall k \in [L].
\end{equation}
This makes $\hat{\mbf{G}}^{t+1}$ the optimal Bayes estimate from $\mbf{B}^t$. For $k \in [L]$, we have 
\begin{align}
    \hat{\mbf{G}}_{k,:}^{t+1} &=  \mathbb{E}[\mbf{G}_{k,:} | \mbf{B}_{\sec(k),:}^t ], \nonumber \\ 
    &= \sum_{\ell \in \sec(k)} \mathbb{E}[\mbf{G}_{k,:}|\mbf{B}_{\sec(k),:}^t,e_\ell]p(e_\ell | \mbf{B}_{\sec(k),:}^t), \\
    &= \mathbb{E}[\mbf{G}_{k,:}|\mbf{B}_{\sec(k),:}^t,e_k]p(e_k | \mbf{B}_{\sec(k),:}^t) \label{mmse_krow},
\end{align}
where $e_k$ is the event that $\mbf{G}_{k,:}$ is the only nonzero row in $\mbf{G}_{\sec(k),:}$. Given $e_k$, the entries of $\mbf{B}_{k,:}^t$ are distributed as, 
\begin{equation}
    \mbf{B}_{k,:}^t = \mbf{h}^T + \tau_t\mbf{W}_{k,:}. \label{B_e_k}
\end{equation}
Given that $\mbf{h} \sim \mathcal{CN}(0,\sigma_h^2 \mathbb{I}_d)$, the expectation in (\ref{mmse_krow}) can be computed as 


\begin{equation}
    \mathbb{E}[\mbf{G}_{k,:}|\mbf{B}^t_{\sec(k),:},e_k] = \frac{\sigma_h^2}{\sigma_h^2 + \tau_t^2}\mbf{B}^t_{k,:}. \label{exp_value}
\end{equation}
Using the Bayes' rule, we can compute \begin{equation}
    p(e_k|\mbf{B}_{\sec(k),:}^t)  = \frac{p(\mbf{B}_{\sec(k),:}^t|e_k) p(e_k)}{\sum_{\ell \in \sec(k)} p(\mbf{B}_{\sec(k),:}^t|e_\ell)p(e_\ell)},\label{bayes e|b} 
\end{equation}
but $p(e_k)=p(e_\ell), \forall \ell \in \sec(k)$, as the location of the only non-zero entry in each section is equiprobable.

Given $e_k$ and for $\ell \in \sec(k)$ from (\ref{B_e_k}) we have
\begin{equation} 
p (\mbf{B}_{\ell,:}^t|e_k) \sim
\begin{cases}
\mathcal{CN}(0,(\sigma_h^2 + \tau_t^2)\mathbb{I}_D), & \text{for} \quad  \ell=k, \\
\mathcal{CN}(0,\tau_t^2\mathbb{I}_D), & \text{else} \quad \ell \neq k,
\end{cases} \label{dist s|e}
\end{equation} 
and, 
\begin{equation}
    p(\mbf{B}_{\sec(k),:}^t|e_k) = \prod_{\ell\in \sec(k)} p(\mbf{B}_{\ell,:}^t|e_k). \label{B_prod}
\end{equation}
Finally, since $\mbf{G}$ and $\mbf{W}$ are independent and all entries of $\mbf{h}^T$ are i.i.d, from (\ref{dist s|e}) and (\ref{B_prod}) we have,
\begin{align}
    p(\mbf{B}_{\sec(k),:}^t|e_k) & \propto \exp{\left (\frac{-\|\mbf{B}_{k,:}^t\|^2}{\tausig} + \sum_{\ell \in \sec(k)\setminus k} \frac{-\|\mbf{B}_{\ell,:}^t\|^2}{\tau_t^2} \right )}, \nonumber\\
    &\propto \exp{ \left (\frac{\sigma_h^2 \|\mbf{B}_{k,:}^t\|^2}{\tau_t^2(\tausig)} - \sum_{\ell \in \sec(k)} \frac{\|\mbf{B}_{\ell,:}^t\|^2}{\tau_t^2} \right )}. \label{joint_pdf}
\end{align}

Plugging the density functions (\ref{joint_pdf}) along with (\ref{bayes e|b}) and (\ref{exp_value}) in (\ref{mmse_krow}), we get the MMSE estimator for (\ref{AMP3}) as,
\begin{equation}
\eta_t(\mbf{B}_{k,:}^t;\tau_t^2) = \frac{\exp{\left ( \frac{\sigma_h^2\|\mbf{B}_{k,:}^t\|^2}{ \tau_t^2 (\tausig)}  \right )}}{\sum_{\ell\in \sec(k)} \exp{\left ( \frac{ \sigma_h^2\|\mbf{B}_{\ell,:}^t\|^2}{\tau_t^2 (\sigma_h^2 + \tau_t^2)}  \right )}} \frac{\sigma_h^2}{\tausig }  \mbf{B}_{k,:}^t. \label{MMSE estimator}
\end{equation}
The MMSE denoiser acts row-wise on $\mbf{B}^t$. Since the denoiser takes a row vector of $\mbf{B}^{t}$ as input and produces a row vector output, the derivative $\eta_{t}'(\cdot;\tau_t^2)$ with respect to its input row vector in (\ref{MMSE estimator}) is a Jacobian matrix. Specifically $(i,j)$\textsuperscript{th} entry of the Jacobian matrix corresponding to the derivative $\eta_t$ with the row $\mbf{B}_k^t$ is,
\begin{equation}
    \frac{\partial \eta(\mbf{B}_{k,i}^{t})}{\partial \mbf{B}_{k,j}^{t}} =  \eta (\mbf{B}_{k,i}^{t}) \left [\frac{\delta_{ij}}{\mbf{B}_{k,i}^{t}} + \frac{\bar{\mbf{B}}_{k,j}^{t}\sigma_{h}^{2}}{\tau_{t}^{2} (\tausig)} + \frac{\eta (\mbf{B}_{k,j}^t)\bar{\mbf{B}}_{k,j}^t}{\tau_t^2 \mbf{B}_{k,j}^t}  \right],
\end{equation}
where, $\mbf{B}_{k,j}^t$ corresponds to the $j$\textsuperscript{th} entry of the $k$\textsuperscript{th} row of $\mbf{B}^t$, $\bar{\mbf{B}}_{k,j}$ is the complex conjugate of $\mbf{B}_{k,j}$ and $\delta_{ij}=1$ only if $i=j$. This gives a square matrix for every row input, and the Onsager term in (\ref{onsager_term}) is calculated as the average of all Jacobian matrices corresponding to all rows of $\mbf{B}^t$, giving a $D\times D$ matrix.

\subsubsection{State Evolution Parameters}
The constants $\{\tau_t^2\}_{t\geq0}$ in (\ref{AMP3}) predict the mean squared error (MSE) between the true $\mathbf{G}$ and the estimated signal $\hat{\mbf{G}}^t$ at each iteration $t$ and are known as state evolution (SE) parameters. The SE recursion goes as follows:
\begin{align}
    \tau_{t+1}^2 &= \sigma_v^2 + \frac{1}{ND} \mathbb{E} [\|\hat{\mbf{G}}^{t+1} - \mathbf{G} \|^2], \label{SE1}  \\
    &= \sigma_v^2 + \frac{1}{ND} \mathbb{E} [\|\eta_t(\mathbf{G} + \tau_t\mathbf{W}) - \mathbf{G} \|^2]. \label{SE2}
\end{align}
with initial $\tau_0^2 = \sigma_v^2 + \frac{1}{ND} \mathbb{E}[\| \mathbf{G}\|^2]$. If AMP is run for $T$ iterations, the values of $\tau_t^2,\forall t \in [T]$ are pre-computed from (\ref{SE2}) using Monte Carlo simulations. As we see, realisations of true signal are a part of this computation, which makes it clear that $\tau_t^2$ computation is not part of the actual decoding; rather, the pre-computed values for a given $\sigma_v^2$ are used at the time of actual decoding. We call these values the \emph{offline} state evolution parameters. 

\begin{figure}
    \centering
    \includegraphics[width=\columnwidth]{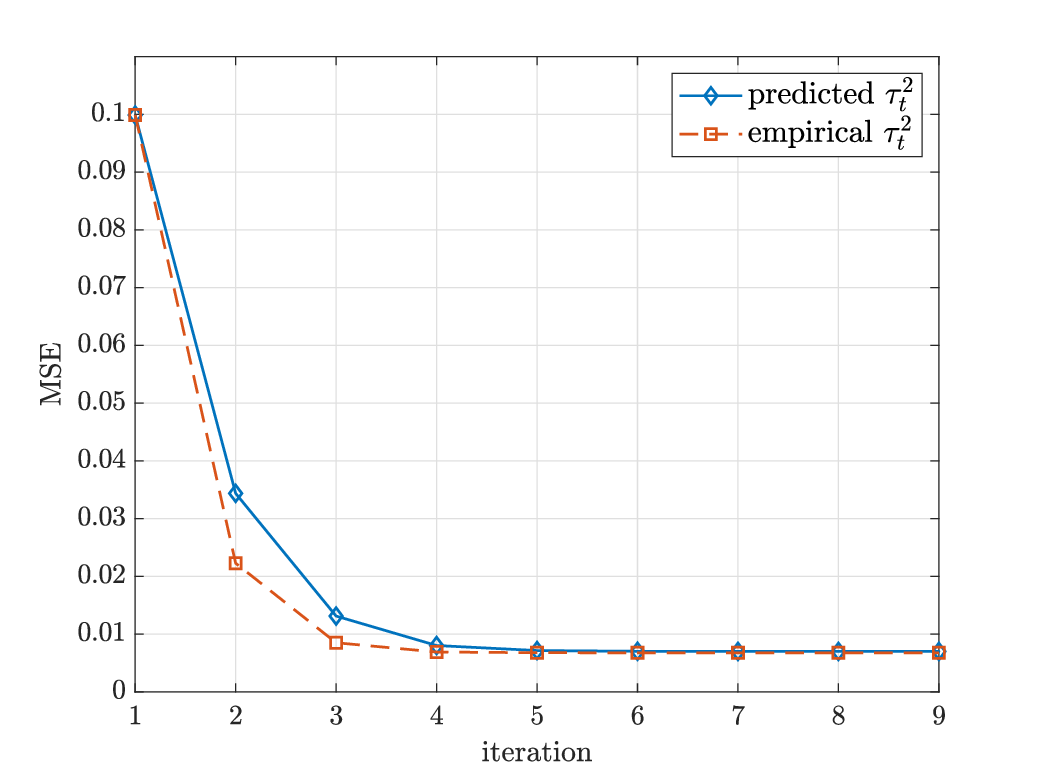}
    \caption{Comparison of state evolution parameters and empirical MSE for (256,124) SPARCs at Eb/N0 = 18 dB.}
    \label{fig:tau_emp_pred}
\end{figure}

In (\ref{SE1}), the average MSE between the actual estimate $\mbf{\hat{G}}$ and the true $\mbf{G}$ is computed across many decoding realisations to give us the empirical SE. Figure \ref{fig:tau_emp_pred} shows the comparison of the predicted SE (\ref{SE2}) parameters and the empirical SE (\ref{SE1}) with iteration number $t$. The SE parameters closely track the empirical MSE and converge to a fixed point. The number of iterations $T$, after which we terminate the SAMP algorithm, can be equal to the number of iterations for the offline predicted SE parameters to converge to the fixed point.

As an alternative approach, we can compute the SE parameters $\tau_t^2$ while decoding the message. This is called the \emph{online} state evolution \cite{2017_greig_Techniques}. At each step $t$, an online estimate of $\tau_t^2$ is calculated as,
\begin{equation}
    \Tilde{\tau}_t^2 = \frac{\|\mbf{Z}^t\|^2}{ND}.
\end{equation}
In \cite{2017_Crush_unmod_sparc,2017_Crush_err_exponent} the authors proved that with large $N$, $\Tilde{\tau}_t^2$ is close to $\tau_t^2$ with high probability. Similar online estimates for AMP have been used in \cite{2021_Ramji_mod_sparc,2017_Crush_err_exponent}. This provides an alternative and computationally simpler way to estimate the state evolution parameters. With online SE computation, we can run SAMP until the SE evolution parameter stops changing with iterations. Later, we will show through simulations that SAMP with online state evolution performs very similarly to SAMP with offline SE computation.  

\section{Existing Error Control Codes} \label{Polar codes section}

\subsection{Polar Codes with Pilot-Aided Transmission}
Pilot-aided transmission (PAT) enables the use of conventional error-correcting codes by transmitting known pilot symbols and facilitating channel estimation. The receiver can decode the data based on the estimated channel \cite{apk1,apk2}. In our flat fading model, a single pilot symbol with energy $E_p = \alpha E$ is sufficient to estimate the unknown fading channel, and the remaining energy $E_d=(1-\alpha)E$ is allocated for the data codeword. The value of $\alpha$ can be chosen to maximise the effective SNR, as described below. The observations at the $i^{th}$ receive antenna corresponding to the pilot and the data are given respectively as $ y_{p,i} = h_i\sqrt{E_p} + v_{p,i}$ and $\mathbf{y}_{d,i} = h_i\sqrt{\frac{E_d}{N}}\mathbf{s}_d + \mathbf{v}_{d,i}$.
The MMSE estimate of the channel is $\hat{h}_{i} = \frac{\sqrt{E_p}\sigma_h^2}{E_p \sigma_h^2 + \sigma_v^2} y_{p,i}$, and the estimation error is $\Tilde{h}_i = h_i - \hat{h}_i$. Now, the data observation can be written as,
\begin{equation} \label{y_d after estimation}
    \mathbf{y}_{d,i} = \hat{h}_i\sqrt{\frac{E_d}{N}}\mathbf{s}_d + \Tilde{h}_i\sqrt{\frac{E_d}{N}}\mathbf{s}_d + \mathbf{v}_{d,i}.
\end{equation} 
The last two terms in (\ref{y_d after estimation}) act as noise, which includes the CWGN noise and the channel estimation error. We choose the value of $\alpha$, in order to maximise the effective SINR as $ \displaystyle \alpha_{opt} = \argmax_{0<\alpha<1} \frac{\frac{E_d}{N}\hat{\sigma}_h^2}{\frac{E_d}{N}\Tilde{\sigma}_h^2 + \sigma_v^2}$,
where $\Tilde{\sigma}_h^2 = \frac{\sigma_h^2\sigma_v^2}{E_p\sigma_h^2 + \sigma_v^2}$ and $\hat{\sigma}_h^2 = \sigma_h^2 - \Tilde{\sigma}_h^2$. 
Once we have the estimate of the unknown fading channel, we perform \emph{maximal ratio combining (MRC)} of the observations ${\mathbf{z}}_d = \sum_{i=1}^{D} \hat{h}_i^{*}\mathbf{y}_{d,i}$ and decode the codeword $\mathbf{s}_d$.

\renewcommand{\tabcolsep}{1pt}
\begin{table*}[t]
\caption{Computational complexity comparison} 
\resizebox{\textwidth}{!}{%
\begin{tabular}{|p{0.4in}|p{2.5in}|p{2.5in}|p{2in}|} 
\hline
 \scriptsize
\textbf{Decoder} &  \scriptsize \textbf{Additions}                 &  \scriptsize \textbf{Multiplications}      \\ \hline
 \scriptsize MBOMP  &  \scriptsize $LD(K+1)(N+1)/2$ $+7NK-2K$  &   \scriptsize $LD(K+1)+7NK$  \\ \hline
 \scriptsize MLMP &  \scriptsize $L(K+1)(3N + n+3D)/2 + DLN + K+1$ &   \scriptsize $L(K+1)(N+n+D+3) + K+2$    \\ \hline
 \scriptsize PAT-Polar &  \scriptsize $2N(4n+18) + J(2N \log_2(2N)(2n+6)+2N(2n+4) - J) + (2N+1)(2N-N_b+1)$ &  \scriptsize
  $2N(4n+18) + J(2N\log_2({2N})(3(n-1)+2) + 2N(2+4(n-1))) + (2N+1)(2N-N_b+1)$  \\ \hline
  

 \scriptsize SAMP &  \scriptsize $T(LND+2ND+4L+KM-K) + (T-1)(5LD^2-D^2-2LD+LND+N)$  &  \scriptsize $T(3LD+11L)+(T-1)(8LD^2+4LD+1)$ \\ \hline

\end{tabular}%
}

\label{comp_complexity}
\end{table*}

\subsection{Noncoherent Polar Codes}
The authors in \cite{yuan2021polar} proposed a polar-coded non-coherent (NC-Polar) communication without transmitting pilots.  First, the magnitude of the unknown fading channel is estimated based on the norm of the received signal. The unknown phase is then found using a grid search using SCL polar decoding for each hypothesised value of $\theta$. However, this method faces a phase ambiguity between $\theta$ and $\theta+\pi$, which is resolved by using a CRC check. To achieve maximal ratio combining, all possible combinations of $h_i$ and $-h_i$ must be considered, leading to a significant increase in computational complexity with the number of antennas.

\subsection{Complexity Comparison}
The exact number of additions and multiplications required for each technique is given in Table \ref{comp_complexity}. For exponentiation and logarithms, we use an order-$5$ polynomial approximation. For the AMP decoder, we use $T=10$ iterations. We use an SCL decoder with list size $J=16$ for polar codes. For the non-coherent polar decoding, due to the grid-search-based phase estimation, the complexity linearly increases with the number of grid points. For the code rate $(2N, N_b)=(128,40)$, with $D=4$ antennas, we numerically calculate the total number of flops $F$ for each technique and normalise them by the number of flops of MLMP, resulting in the relative scale $F^{MBOMP}: F^{MLMP}: F^{PAT-Polar}: F^{NC-Polar}: F^{SAMP}\approx 0.5:1:0.1:8:4.8$.

\subsection{Coherent Sphere Packing Bound} \label{SPB section}
In \cite{shannon1959probability}, Shannon proved that the probability of error for codes transmitted over the AWGN channel is lower bounded by the sphere packing bound, which uses the cdf of \emph{noncentral t-distribution}. Now, for coherent communication over a SIMO channel (\ref{MSPARC_MMV}) with $D$ receive antennas, we combine the received codewords using maximal ratio combining (MRC). 
\begin{align}
    \mbf{y} &= \frac{1}{\|\mbf{h}\|}\sum_{i=1}^{D}h_i^*\mbf{y}_i, \nonumber \\
    \mbf{y} &= \|\mbf{h}\|\mbf{s} + \mbf{v}.
\end{align}
Now, this is similar to the AWGN channel with power $\alpha P$, where $\alpha = \sum_{i=1}^Dh_i^2$, and $P$ is the average power constraint of codewords $\mbf{s}$. Since each $h_i \sim \mathcal{CN}(0,1)$, we have $\alpha \sim Gamma(D,1)$. To find the coherent sphere packing bound, we average the probability of error with respect to the distribution of $\alpha$. Hence, 
\begin{equation} \label{cohernet_spb}
    P_e = \int_0^\infty \text{nctcdf}(\sqrt{2N-1}\cot\theta,\sqrt{2N\alpha P},2N-1)\text{f}(\alpha)d\alpha .
\end{equation}
where nctcdf denotes the CDF of the noncentral t-distribution and $\text{f}(\alpha)$ denotes the gamma distribution of $\alpha$. $\theta$ refers to the half-angle of the cone corresponding to a codeword, formed by equally partitioning the surface of a unit ball in $R^{2N}$ by all possible codewords.

\section{Simulation results} \label{Results section}
We study the BLER performance of MLMP and SAMP and compare it with other existing sparse recovery algorithms and error control coding techniques. To generate SPARC codewords, we employ complex MUB matrices \cite{woottersMUB} with $N=64$ and $128$ rows denoted by MUB64 and MUB128, respectively, with unit norm columns. These matrices have $L=N^2$ columns, which are then partitioned using a partitioning algorithm, to obtain a certain code rate $R$. SPARC codewords are generated using a sparse sum of $K$ columns of these matrices, generating complex codewords of size $N$, equivalent to $ 2N$ real channel uses. The expected codeword energy is $E_s = K$, with $E_b = E_s/N_b$ and $N_b$ denoting the number of information bits transmitted. Our simulations plot BLER versus $E_b/N_0$, where $N_0 = \sigma_v^2$ for complex codewords. The $(2N, N_b)$ SPARC scheme transmits $N_b$ bits over $2N$ real dimensions, with a rate of $N_b/2N$ bits per real dimension (bpcu). With $D$ receive antennas, the channel variance is normalised to $\sigma_h^2 = 1/D$, ensuring unity array gain, so performance improvements are solely due to diversity gain.

\subsection{Comparison with Greedy Algorithms}

\subsubsection{Comparison with Block-OMP}
In Figure \ref{fig:mlmp_vs_bomp}, we compare the performance of MLMP with the modified block-OMP (MBOMP) algorithm, introduced in Section \ref{mbomp_section} as a modification to the existing block-OMP (BOMP) algorithm \cite{2023_kumar_PBOMP,eldar2010block}. Since the system is flat-fading in all the $D$ channels, the nonzero rows of the row-sparse matrix $\mbf{G}$ in (\ref{MMV system model}) are equal to these fading coefficients. MBOMP takes advantage of this fact to calculate only one coefficient for each sparse column of $\mbf{G}$, utilising columns from all sections, thereby giving a better estimate. MBOMP gives about $0.5$ dB improvement in BLER over traditional BOMP.

\begin{figure}
    \centering
    \includegraphics[width=\columnwidth]{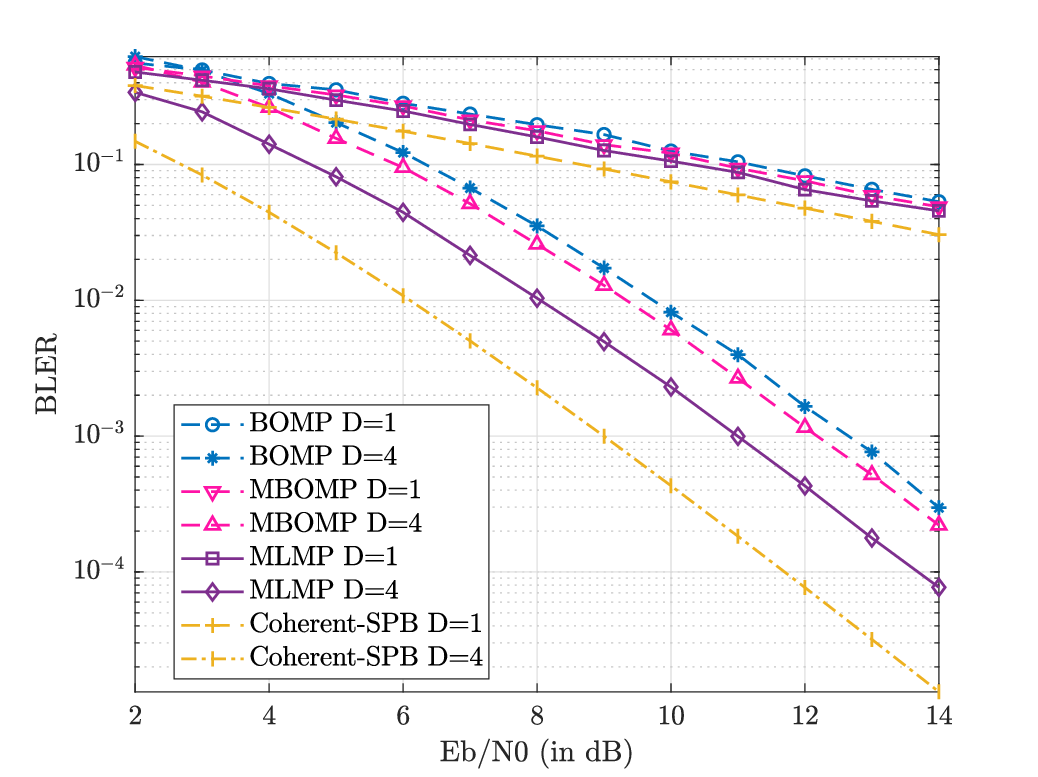}
    \caption{Comparison of MLMP with existing sparse recovery algorithms}
    \label{fig:mlmp_vs_bomp}
\end{figure}

Unlike in OMP-based methods, there is no additional step of nonzero coefficient estimation in MLMP; instead, it computes the ML metric with one column at a time and considers that column, which, when included, gives the highest metric. In addition, when OMP projects the residuals onto the orthogonal complement of the detected columns, the signal component of the undetected columns is reduced. In contrast, MLMP is based on a successive combining method where the previously detected columns are still part of the metric. For these reasons, we observe a significant improvement of up to 1.2 dB with MLMP over MBOMP for (128,40) SPARC and 4 receive antennas. We note that all three algorithms shown in the plot use 8-parallel paths. In the exact figure, we also compare the performance of these sparse recovery algorithms with the coherent sphere packing bound (\ref{cohernet_spb}), which presents a lower bound on the BLER for a particular code rate when the codeword is passed through a SIMO channel with perfect CSI.

\subsubsection{Effect of Parallelisation}
Parallelisation can enhance the performance of the MLMP decoder, but the effect differs with the code rates, that is, the sparsity $K$. As shown in (\ref{K bound}), MLMP can guarantee perfect recovery up to a certain $K$ in the noiseless environment, depending on the mutual coherence $\mu(\mbf{A})$ of the dictionary matrix. Although perfect recovery is not feasible in a noisy environment, we can get good BLER performance with a high $K$ using parallel paths. If $K$ is small enough, then MLMP with a single path performs well, and introducing parallel paths does not significantly improve the performance over $1-$MLMP. However, as $K$ increases to support higher code rates, the first iteration of MLMP is prone to more errors, which can be mitigated by using parallel paths. This phenomenon is shown in Figure \ref{fig:PMLMP_K478}, where we compare the performance of $1-$MLMP vs $8-$MLMP at different code rates or sparsity levels for the same $N$. We observe that the improvement from using parallel paths is more significant for $(128,72)$ obtained from $K=8$ sections, than for $(128,40)$ obtained from $K=4$ sections of MUB64. Another interesting observation is that MLMP with 16 paths only has a marginal improvement over 8-path decoders, showing diminishing returns in BLER as the number of parallel paths increases. This phenomenon is observed at all code rates. 

\begin{figure}
    \centering
    \includegraphics[width=\columnwidth]{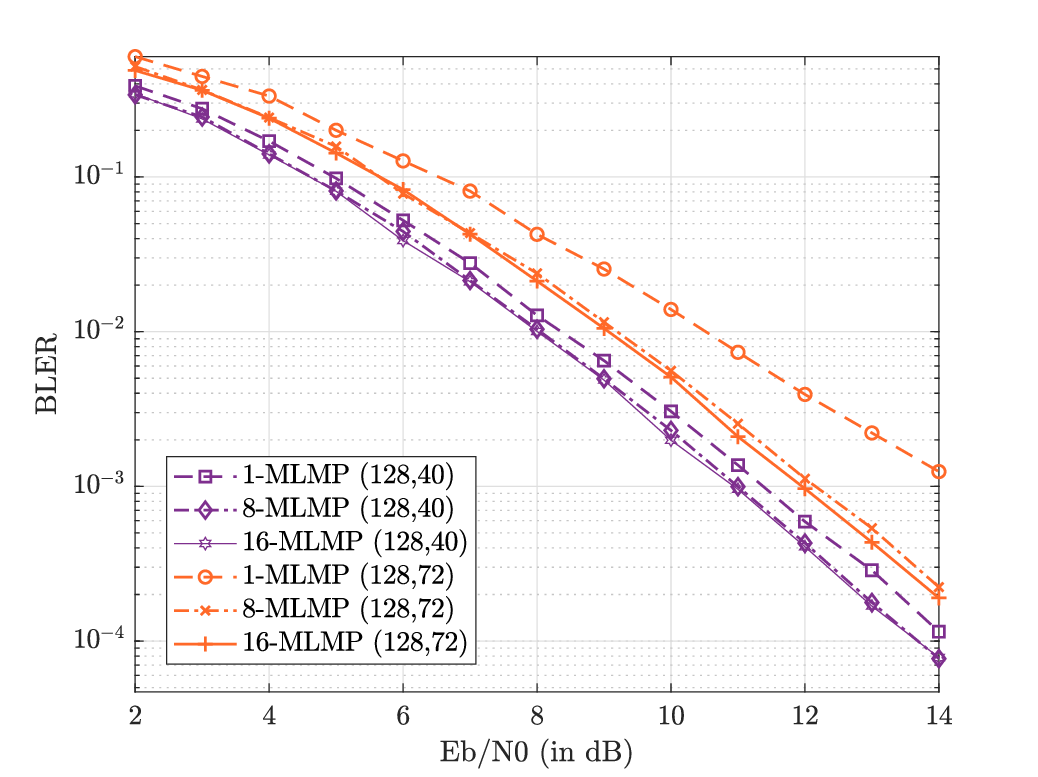}
    \caption{Parallel-MLMP at different code rates}
    \label{fig:PMLMP_K478}
\end{figure}

\subsection{Performance of SAMP}
\subsubsection{SAMP vs MLMP}
In Figure \ref{fig:SAMP_coderate}, we compare the performance of SAMP with MLMP and MBOMP at different code rates. At low code rates, greedy methods outperform SAMP. However, the sparsity undersampling ratio threshold for SAMP is similar to that of convex optimisation methods \cite{2009_Donoho_CS_AMP}, which is higher than that of greedy methods. We see in the figure that as K increases, that is, the code rate increases, SAMP outperforms MBOMP. However, MLMP outperforms both SAMP and MBOMP in all these cases. Since parallelisation cannot be done with SAMP, we compare it with single-path counterparts of the greedy methods. 
\begin{figure}
    \centering
    \includegraphics[width=\columnwidth]{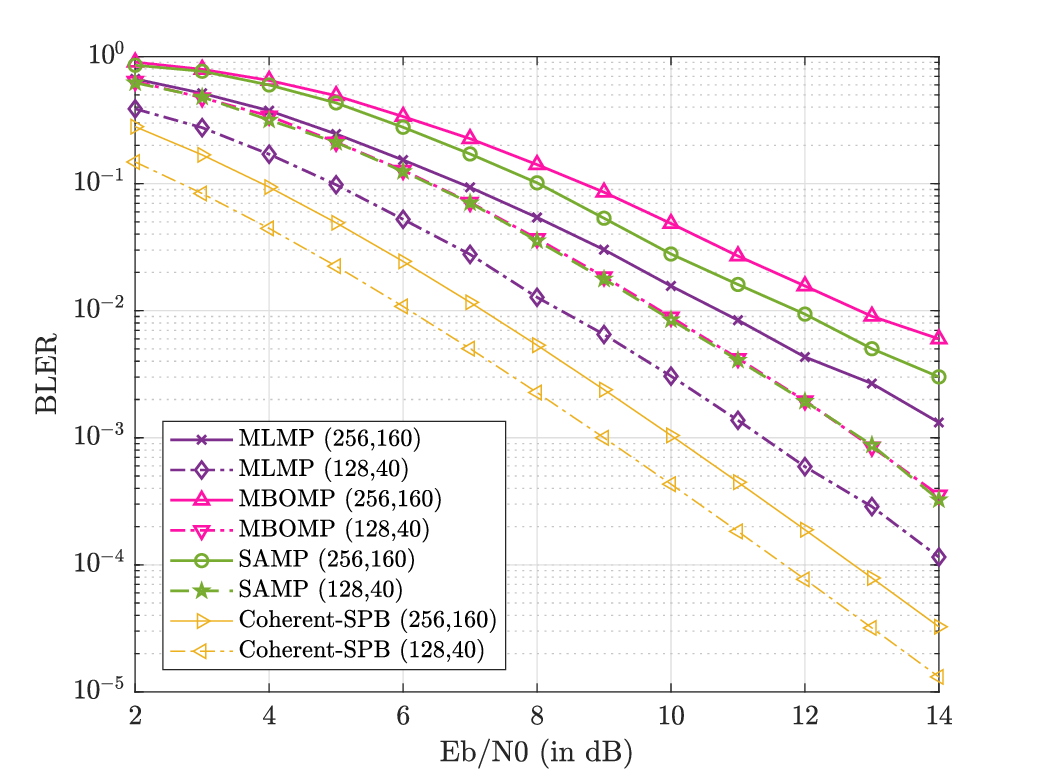}
    \caption{Performance of SAMP at increasing code rate}
    \label{fig:SAMP_coderate}
\end{figure}

\subsubsection{Performance of SAMP Online}
For SAMP, the state evolution parameters are first computed offline before actual decoding, and the number of iterations for AMP is fixed based on the number of iterations it takes for the state evolution to converge to a fixed point. Since offline computation of the state evolution parameters is computationally complex and has a large overhead, online computation during the decoding of the message is more suitable and practical. Figure \ref{fig:SAMP_online} shows the performance of SAMP online compared to SAMP offline. For the (256,124) SPARC code shown in the figure, SAMP takes six iterations for SE to converge to a fixed point at $14$ dB. Figure \ref{fig:SAMP_online} also shows the performance of SAMP when we let it run for 50 iterations, and shows no improvement over the original SAMP. This shows that AMP needs to run only until the SE converges. 

\begin{figure}
    \centering
    \includegraphics[width=\columnwidth]{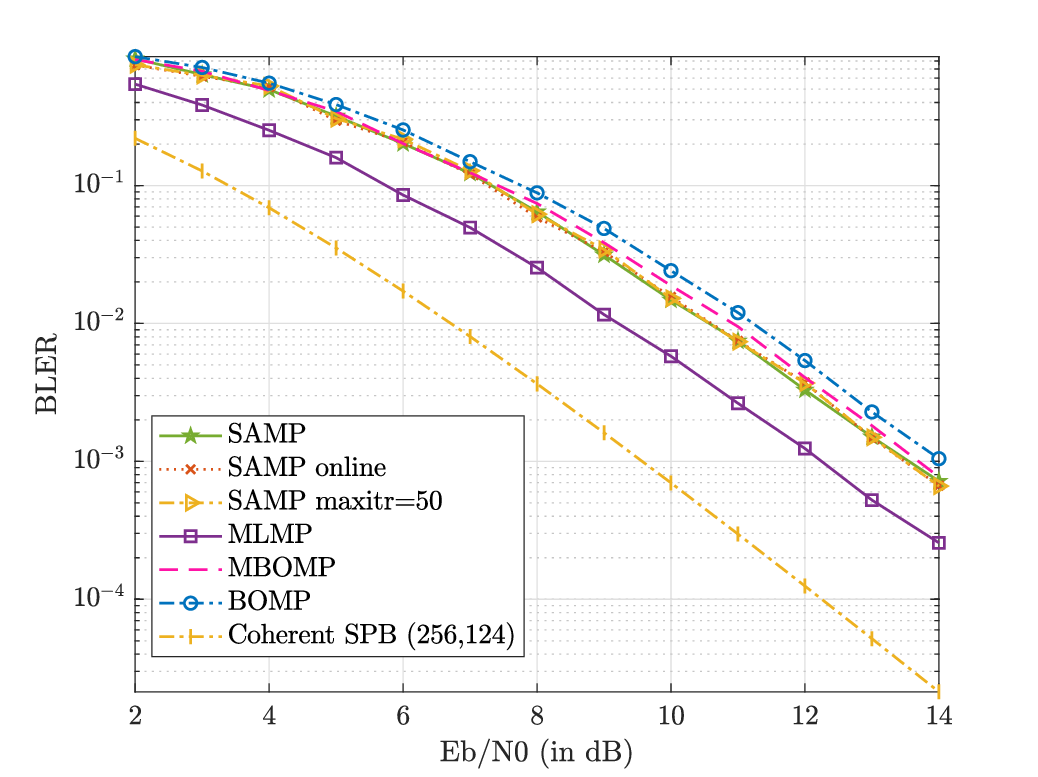}
    \caption{Performance of SAMP with online SE}
    \label{fig:SAMP_online}
\end{figure}


\subsubsection{Performance with different section sizes}
In SPARC, we can have a small number of sections with each section having a large number of columns, or we can have a large number of sections with each section having a small number of columns. With careful choice of parameters, we can have nearly the same coderate in both situations. In Figure \ref{fig:Gauss_perf}, we compare the performance of SAMP for approximately the same code rate, but obtained from different section sizes. Specifically, the code rate $(128,42)$ was obtained from very small section sizes of $64$ columns and $K=7$ sections, compared to $(128,40)$, which was obtained using large section sizes of $1024$ columns and $K=4$ sections. We observe that the performance of SAMP with higher $K$ is poor compared to that of lower $K$ with higher columns per section. We also observe the improvement we get by using MUB sequences for the dictionary matrix rather than a random Gaussian i.i.d. matrix distributed as $\mathcal{CN}(0,1/N)$.

\begin{figure}
    \centering
    \includegraphics[width=\columnwidth]{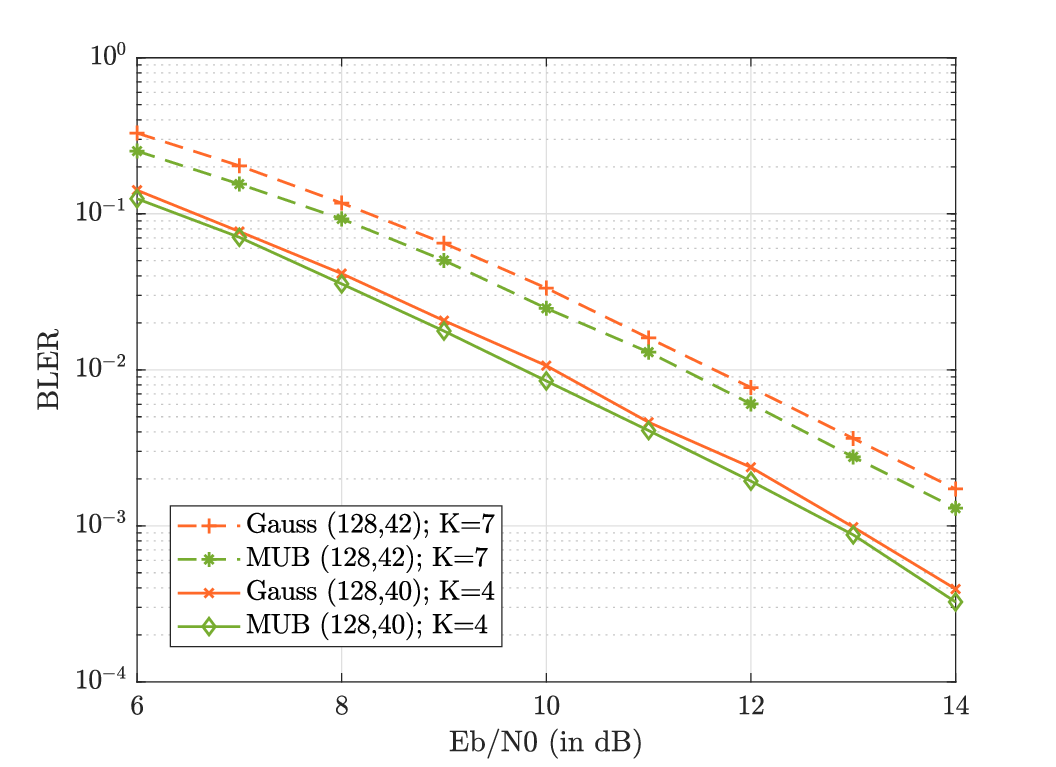}
    \caption{Performance of SAMP with different section sizes}
    \label{fig:Gauss_perf}
\end{figure}

\subsection{Comparison with Polar codes}
In Figure \ref{polar_comp}, we compare the performance of (130,40) polar code \cite{arikan2009polarcodes} used in the 5G NR standards along with pilot-aided transmission and list decoding with list size $J=8$ \cite{tal2015listdecoder}. We also show the performance of (128,40) polar codes for non-coherent detection \cite{yuan2021polar} using joint channel estimation and decoding, denoted by NC-Polar in the legend. With our proposed parallel MLMP decoder, we see that SPARC has significant gains over polar codes with the PAT strategy and non-coherent scheme \cite{yuan2021polar}.

\begin{figure}[t]
    \centering
    \includegraphics[width=\columnwidth]{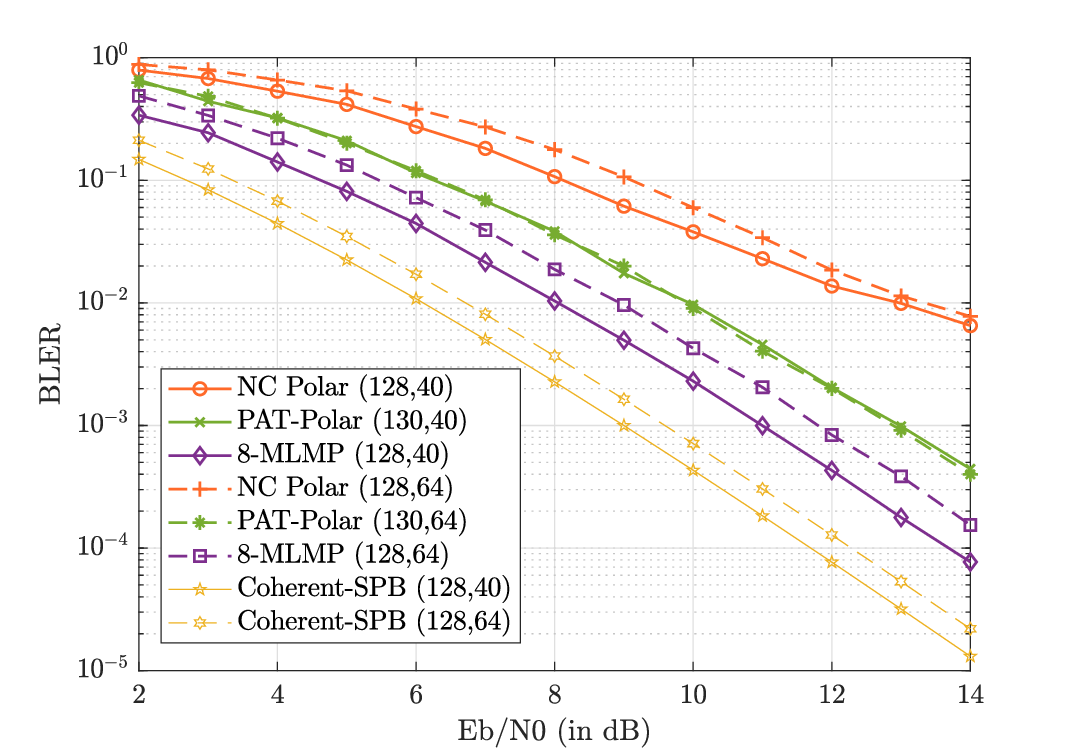}
    \caption{Comparison of MLMP with Polar codes}
    \label{polar_comp}
\end{figure}



\section{Conclusion} \label{conclusion}
In this work, we consider SPARCs with MUB dictionary matrix over an unknown flat-fading channel with multiple receive antennas. We developed a novel MLMP decoder, which was inspired by greedily reducing the search space of the non-coherent ML detector. We introduce the use of parallel paths for MLMP and existing greedy methods to improve the decoding performance. We derive the sparsity upper bound for MLMP for the noiseless system model and show that the MLMP recovery guarantees are the same as those of the state-of-the-art OMP algorithm. We also introduce the SPARC approximate message-passing (SAMP) algorithm for non-coherent detection and show that its performance can be accurately tracked using the state evolution formulation. In the short block length regime, MLMP significantly outperforms the SPARCs with Block-OMP-based and AMP-based decoders and PAT-aided polar codes and polar codes with non-coherent decoders. Although we focus on flat-fading SIMO channels in this paper, it is crucial to extend this work to MIMO frequency-selective channels in the future.

\appendices

\section{Proof of Theorem 1} \label{proof_thm1}

\begin{proof}
We can prove Theorem \ref{Theorem1} by showing that when the condition in (\ref{K bound}) is met, the ML metric corresponding to a chosen column will be higher than that of any other column that was not part of the codeword in each iteration. The noiseless observation can be represented as,
\begin{equation}
    \mbf{y}_i = h_i (\sum_{m \in \mathcal{S}}\mbf{a}_m), \forall i \in [D].
\end{equation}
Assume that $K$ satisfies the bound in (\ref{K bound}), and in the first $k$ iterations of MLMP, the $k$ columns of $\mbf{A}$ are correctly decoded. $\mathcal{S}'$ denotes the indices of correctly decoded columns in the first $k$ iterations, i.e., $\mathcal{S}' \subset \mathcal{S}, |\mathcal{S}'|=k$. Let us denote $\mbf{s}_k = \sum_{\ell \in \mathcal{S}'} \mbf{a}_{\ell}$, which represents the partially decoded codeword after $k$ iterations.

MLMP correctly decodes a column $\mbf{a}_{u}, u \in \mathcal{S}\setminus \mathcal{S}'$ if the minimum metric corresponding to any column $\mbf{a}_u, \forall u\in\mathcal{S}\setminus \mathcal{S}'$ is greater than the maximum value of a metric corresponding to a column $\mbf{a}_w,w\notin \mathcal{S}$.  
\begin{align}
    \sum_{i=1}^{D} |\langle h_i\mbf{s}, \mbf{s}_k + \mbf{a}_u\rangle|^2 &> \sum_{i=1}^{D} |\langle h_i\mbf{s}, \mbf{s}_k +  \mbf{a}_w\rangle|^2, \nonumber \\
     \left (\sum_{i=1}^{D} h_i \right) |\langle \mbf{s},\mbf{s}_k +  \mbf{a}_u\rangle|^2 &> \left (\sum_{i=1}^{D} h_i \right) |\langle \mbf{s}, \mbf{s}_k + \mbf{a}_w\rangle|^2, \nonumber \\
     |\langle \sum_{m \in \mathcal{S}}\mbf{a}_m, \mbf{s}_k + \mbf{a}_u\rangle| &>  |\langle  \sum_{m \in \mathcal{S}}\mbf{a}_m, \mbf{s}_k + \mbf{a}_w\rangle|. \label{mingmax}
\end{align} 
We have a common term on both sides given by $\langle \sum_{m\in \mathcal{S}}\mbf{a}_m,\mbf{s}_k \rangle$, whose real part, 
\begin{align}
    \alpha &= \Re\{\langle \sum_{m\in\mathcal{S}}\mbf{a}_m,\mbf{s}_k \rangle\}, \nonumber\\
    \mathllap{\Rightarrow\quad}  &\geq k - (Kk-k)\mu, \nonumber\\
    \mathllap{\Rightarrow\quad}  &\geq k (1- (K-1)\mu) > 0.
\end{align}
since, $K \leq \frac{1+\mu}{2\mu} < \frac{1+\mu}{\mu}$. 

From (\ref{mingmax}), considering that,
\begin{align}
    |\langle \sum_{m \in \mathcal{S}}\mbf{a}_m, \mbf{s}_k + \mbf{a}_u\rangle| &= |\underbrace{\langle \sum_{m \in \mathcal{S}}\mbf{a}_m,\mbf{s}_k\rangle}_{\alpha+j\zeta} + \langle \sum_{m \in \mathcal{S}}\mbf{a}_m,\mbf{a}_u \rangle|,  \nonumber \\
    \mathllap{\Rightarrow\quad} &\geq |\alpha + j\zeta + 1 + (K-1)\mu|, \nonumber \\
    \mathllap{\Rightarrow\quad} &\geq |(\alpha+1) + j\zeta| - (K-1)\mu. 
\end{align}
And,
\begin{align}
    |\langle \sum_{m \in \mathcal{S}}\mbf{a}_m, \mbf{s}_k + \mbf{a}_w\rangle| &= |\langle \sum_{m \in \mathcal{S}}\mbf{a}_m,\mbf{s}_k\rangle + \langle \sum_{m \in \mathcal{S}}\mbf{a}_m,\mbf{a}_w \rangle|, \nonumber \\
    \mathllap{\Rightarrow\quad} &\leq |\alpha + j\zeta +  K\mu|, \nonumber \\
    \mathllap{\Rightarrow\quad} &\leq |\alpha +  j\zeta| + K\mu. 
\end{align}

MLMP correctly detects the column if,
\begin{align}
    |(\alpha+1) + j\zeta| - (K-1)\mu  &\geq |\alpha +  j\zeta| + K\mu, \nonumber \\
    \mathllap{\Rightarrow\quad} 1+2\alpha-(K-1)\mu &\geq K\mu, \nonumber \\
    \mathllap{\Rightarrow\quad} 1 - (K-1)\mu &\geq K\mu,  \nonumber\\
    \mathllap{\Rightarrow\quad} K \leq \frac{1+\mu}{2\mu}. \nonumber
\end{align}

\end{proof}

\bibliography{my_bib}
\bibliographystyle{ieeetr}
\newpage
\vfill

\end{document}